\documentclass[submission]{eptcs}

\usepackage{amsmath}
\usepackage{amsthm,amssymb,latexsym}
\usepackage{bussproofs}

\newcommand{\proofrule}[3]
{
\displaystyle{\frac{#1}{#2}} \; #3
}


\theoremstyle{plain}
\newtheorem{theorem}{Theorem}
\newtheorem{lemma}[theorem]{Lemma}
\newtheorem{corollary}[theorem]{Corollary}

\theoremstyle{definition}
\newtheorem{definition}[theorem]{Definition}

\def\atoms{\textsc{Prop}}

\def\langnull{\mathcal{L}}
\def\langaux{\mathcal{L}_{\Omega}}

\def\clorule{\textsf{clo}}
\def\cutrule{\textsf{cut}}

\def\OmegaT{\tilde{\Omega}}
\def\nurule{\omega}

\def\indrule{\textsf{ind}}

\newcommand{\mathth}[1]{\mathbf{#1}}

\def\sys{\mathth{M}}
\def\sysi{\mathth{M}^{\omega}}
\def\syso{\mathth{M}^{\omega,\Omega}}

\def\nuaux{\nu'}

\def\prov#1#2{\mathrel{\vrule width 0.3pt height 8pt depth
2pt\mkern-2mu
\textstyle{\frac{\hskip0.5ex\raise2pt\hbox{$\scriptstyle#1$}\hskip0.5ex}
{\hskip0.5ex#2\hskip0.5ex}}}}

\newcommand{\onot}[1]{\overline{#1}}
\def\lev{\mathsf{lev}}

\title{Cut-elimination for the mu-calculus with one variable}
\author{Grigori Mints
\institute{Dept.\ of Philosophy\\Stanford University\\USA}
\email{gmints@stanford.edu}
\and
Thomas Studer
\institute{Inst.\ of Computer Science and Appl.\ Math.\\
University of Bern\\Switzerland}
\email{tstuder@iam.unibe.ch}
}

\begin{document}

\maketitle

\begin{abstract}
We establish syntactic cut-elimination for the one-variable fragment of the modal mu-calculus. Our method is based on a recent cut-elimination technique by Mints that makes use of Buchholz' $\Omega$-rule.
\end{abstract}

\section{Introduction}

The propositional modal $\mu$-calculus is a well-established modal fixed point logic that includes fixed points for arbitrary positive formulae. Thus it subsumes many temporal logics (with an always operator), epistemic logics (with a common knowledge operator), and program logics (with an iteration operator). 

Making use of the finite model property, Kozen~\cite{koz88} introduces a sound and complete infinitary system for the modal $\mu$-calculus.
In this system greatest fixed points are introduced by means of the $\omega$-rule that has a premise for each finite approximation of the greatest fixed point. 
J\"ager et al.~\cite{jks:mu} show by {\em semantic} methods that the cut rule is admissible in this kind of infinitary systems. So far, however, there is no {\em syntactic} cut-elimination procedure available for the modal $\mu$-calculus. It is our aim in this paper to present an effective cut-elimination method for the one-variable fragment of the $\mu$-calculus.

There are already a few results available on syntactic cut-elimination for modal fixed point logics. Most of them make use of deep inference where rules may not only be applied to outermost connectives but also deeply inside formulae. The first result of this kind has been obtained by Pliuskevicius~\cite{Pliuskevicius91} who presents a syntactic cut-elimination procedure for linear time temporal logic.
Br\"unnler and Studer~\cite{bs09b} employ nested sequents to develop a cut-elimination procedure for the logic of common knowledge. Hill and Poggiolesi~\cite{hipo10} use a similar approach to establish effective cut-elimination for propositional dynamic logic. A generalization of this method is studied in~\cite{bs11} where it is also shown that it cannot be extended to fixed points that have a $\Box$-operator in the scope of a $\mu$-operator. Fixed points of this kind occur, for instance, in $\mathsf{CTL}$ in the form of universal path quantifiers.

Thus we need a more general approach to obtain syntactic cut-elimination for the modal $\mu$-calculus. A standard proof-theoretic technique to deal with inductive definitions and fixed points is Buchholz' $\Omega$-rule~\cite{buchholz81,buschue88}. J\"ager and Studer~\cite{js11} present a formulation of the $\Omega$-rule for non-iterated modal fixed point logic and they obtain cut-elimination for positive formulae of this logic.
In order to overcome this restriction to positive formulae, Mints~\cite{mintsMu} introduces an $\Omega$-rule that has a wider set of premises, which enables him to obtain full cut-elimination for non-iterated modal fixed point logic. 

Mints' cut-elimination algorithm makes use of, in addition to ideas from~\cite{buchholz01}, a new tool presented in~\cite{mintsMu}.
It is based on the distinction, see \cite{takeuti87}, between implicit and explicit occurrences of formulae in a derivation with cut. If an occurrence of a formula is traceable to the endsequent of the derivation, then it is called explicit. If it is traceable to a cut-formula, then it is an implicit occurrence.  

Implicit and explicit occurrences of greatest fixed points are treated differently in the translation of the induction rule to the infinitary system.
An instance of the induction rule that derives a sequent $\nu X.A, B$ goes to an instance of the $\omega$-rule if $\nu X.A$ is explicit. 
Otherwise, if $\nu X.A$ is traceable to a cut-formula, the induction rule is translated to an instance of the $\Omega$-rule that is preserved until the last stage of cut-elimination. At that stage, called collapsing, the $\Omega$-rule is eliminated completely.
 
In the present paper we show that this method can be extended to
a $\mu$-calculus with iterated fixed points. Hence we obtain complete syntactic cut-elimination for the one-variable fragment of the modal $\mu$-calculus. Our infinitary system is completely cut-free in the sense that there are not only no cut rules in the system but also no embedded cuts. Thus our cut-free system enjoys the subformula property. This is in contrast to the recent cut-elimination results by Baelde~\cite{DBLP:journals/corr/abs-0910-3383} and by Tiu and Momigliano~\cite{DBLP:journals/corr/abs-1009-6171} for the finitary systems $\mu\mathsf{MALL}$ and $\mathsf{Linc}^-$, respectively, where the $\nu$-introduction rule and the co-induction rule contain embedded cuts, which results in the loss of the subformula property.

%

\section{Syntax and semantics}

We first introduce the language $\langnull$.
We start with a countable set $\atoms$ of atomic propositions $p_i$ and their negations $\onot{p_i}$.
We use $P$ to denote an arbitrary element of $\atoms$.
Moreover, we will use a special variable $X$.

\begin{definition} 
{\em Operator forms} $A, B, \ldots$ are given by the following grammar:
\[
A :== p_i \ |\  \onot{p_i} \ |\ X \ |\ A \land A \ |\ A \lor A \ |\ \Box A \ |\ \Diamond A \ |\ \mu X.A \ |\ \nu X.A .
\] 
{\em Formulae} $F$ are defined by:
\[
F :== p_i \ |\  \onot{p_i}  \ |\ F \land F \ |\ F \lor F \ |\ \Box F \ |\ \Diamond F \ |\ \mu X.A \ |\ \nu X.A .
\] 
\end{definition}

The fixed point operators $\mu$ and $\nu$ bind the variable $X$ and, therefore, we will talk 
of free and bound occurrences of $X$. Hence a formula is an operator form without free occurrences of $X$.


The negation of an operator form is inductively defined as follows.
\begin{enumerate}
\item $\lnot p_i := \onot{p_i}$ and $\lnot \onot{p_i} := p_i$
\item $\lnot X := X$
\item $\lnot (A \land B) := \lnot A \lor \lnot B$ and $\lnot (A \lor B) := \lnot A \land \lnot B$
\item $\lnot \Box A := \Diamond \lnot A$ and $ \lnot \Diamond A := \Box \lnot A$
\item $\lnot \mu X.A := \nu X.\lnot A$ and $\lnot \nu X.A := \mu X.\lnot A$
\end{enumerate}

Note that negation is well-defined: the negation of an $X$-positive operator form is again $X$-positive since we have $\lnot X := X$. 
Thus, for example, 
\[
\lnot \mu X.\Box (p_i \land X) := \nu X.\lnot \Box (p_i \land X) := \nu X. \Diamond \lnot (p_i \land X) := \nu X. \Diamond ( \lnot p_i \lor \lnot X):= \nu X. \Diamond ( \onot{p_i} \lor  X).
\] 

For an arbitrary but fixed atomic proposition $p_i$ we set
%
$\top := p_i \lor \onot{p_i}$. 
If $A$ is an operator form, then we write $A(B)$ for the result of simultaneously substituting $B$ for 
every free occurrence of $X$ in $A$. We will also use finite iterations of operator forms, given as follows
\[
A^0(B) := B \text{ and } A^{k+1}(B):= A (A^k(B)).
\]

\section{System $\sys$}

System $\sys$  derives sequents, that are finite sets of formulae. We denote sequents by $\Gamma, \Sigma$ and use the following notation:
if $\Gamma := \{A_1, \ldots, A_n  \}$, then $\Diamond \Gamma := \{\Diamond A_1, \ldots,\Diamond A_n  \}$,
System $\sys$ consists of the axioms and rules given in Figure~\ref{fig:sys}.

\begin{figure}[h]
  \centering
\fbox{\parbox{.9\linewidth}{
\[
{\Gamma,P,\lnot P} 
\qquad\qquad 
{\Gamma,\mu X.A, \lnot \mu X.A} 
\]

\[
\proofrule{\Gamma,A,B}{\Gamma,A \lor B}{(\lor)}
\qquad\qquad
\proofrule{\Gamma,A \quad\quad\quad \Gamma,B}{\Gamma,A \land B}{(\land)}
\qquad\qquad
\proofrule{\Gamma, A}{\Diamond \Gamma, \Box A, \Sigma}{(\Box)}
\]

\[
\proofrule{\Gamma, A(\mu X.A)}{\Gamma, \mu X.A}{(\clorule)}
\qquad\qquad
\proofrule{\lnot A(B), B}{\lnot \mu X.A, B}{(\indrule)}
\qquad\qquad
\proofrule{\Gamma,A \quad\quad\quad \Gamma,\lnot A}{\Gamma}{(\cutrule)}
\]
}}
  \caption{System $\sys$}
\label{fig:sys}
\end{figure}



\section{System $\sysi$}

System $\sysi$ is an infinitary cut-free system for the modal $\mu$-calculus with one variable. It consists of the axioms and rules
 given in Figure~\ref{fig:sysi}.

\begin{figure}[h]
  \centering
\fbox{\parbox{.9\linewidth}{
\[
{\Gamma,P,\lnot P} 
\]

\[
\proofrule{\Gamma,A,B}{\Gamma,A \lor B}{(\lor)}
\qquad\qquad
\proofrule{\Gamma,A \quad\quad\quad \Gamma,B}{\Gamma,A \land B}{(\land)}
\qquad\qquad
\proofrule{\Gamma, A}{\Diamond \Gamma, \Box A, \Sigma}{(\Box)}
\]

\[
\proofrule{\Gamma, A(\mu X.A)}{\Gamma, \mu X.A}{(\clorule)}
\qquad\qquad
\proofrule{\Gamma, A^i(\top) \text{ for all natural numbers $i$}}{\Gamma,\nu X.A}{(\nurule)}
\]
}}
  \caption{System $\sysi$}
\label{fig:sysi}
\end{figure}

%
%

\section{System $\syso_k$}

In order to embed $\sys$ into $\sysi$, we need a family of intermediate systems $\syso_k$ that include additional rules to derive greatest fixed points that later will be cut away.

The language $\langaux$ extends  $\langnull$ by a 
new connective $\nuaux$ to denote those greatest fixed points.
Formally, $\langaux$ is given as follows.
Operator forms of $\langaux$ are defined like operator forms of $\langnull$
with the additional case
\begin{enumerate}
\item If $A$ is an operator form, then $\nuaux X.A$ is also an operator form. 
\end{enumerate}
A formula of $\langaux$ is an $\langaux$ operator form without free occurrence of $X$.
A formula is a {\em greatest fixed point} if it has the form $\nu X.A$ or $\nuaux X.A$.

\begin{definition}\label{def:level:1}
The level $\lev(A)$ of an operator form $A$ is the maximal nesting of fixed point operators in $A$. Formally we set:
\begin{enumerate}
\item $\lev(P) := \lev(X) :=  0$  for all $P$ in $\atoms$
\item $\lev(A \land B) := \lev(A \lor B) := \max(\lev(A), \lev(B))$
\item $\lev(\Box A ) := \lev(\Diamond A) := \lev(A)$
\item $\lev(\mu X.A ) := \lev(\nu X.A) := \lev(\nuaux X.A) := \lev(A) +1$ 
\end{enumerate}  
\end{definition}

The level of a sequent is the maximum of the levels of its formulae.
We say a formula (sequent) is {\em $k$-positive} if 
for all $\nuaux X.A$ occurring in it we have $\lev(\nuaux X.A)<k$.



When working in $\syso_k$, we will use the following notation:
the formula $A'$ is obtained from $A$ by replacing all occurrences of $\nu X$ in $A$ with $\nuaux X$.  

Let $k \geq 0$.
System $\syso_k$ consists of the axioms and rules of $\sysi$ (formulated in $\langaux$) and the additional rules: $\cutrule$, $\Omega_h$, and $\OmegaT_h$.
The $\cutrule$ rule is given as follows
\[
\proofrule{\Gamma,A' \quad\quad\quad \Gamma,(\lnot A)'}{\Gamma}{(\cutrule)},
\]
where $A$ is a formula with $\lev(A) \leq k$.
The rules $\Omega_h$ and $\OmegaT_h$ , where $1 \leq h \leq k$, are informally described as follows:
\begin{prooftree}
\def\extraVskip{4pt}
		\AxiomC{$\cdots$}
		\AxiomC{$\syso_{k-1} \prov{}{0}  \Delta,\, (\mu X.A)'$}
		\dashedLine
		\UnaryInfC{$\Delta,\, \Gamma$}
		\AxiomC{$\cdots$}
		\RightLabel{\quad$\Omega_h$}
		\TrinaryInfC{$\Gamma,\, (\lnot \mu X.A)'$}
\end{prooftree}
and
\begin{prooftree}
\def\extraVskip{4pt}
		\AxiomC{$\Gamma,\, (\mu X.A)' \qquad\qquad \cdots$}
		\AxiomC{$\syso_{k-1} \prov{}{0}  \Delta,\, (\mu X.A)'$}
		\dashedLine
		\UnaryInfC{$\Delta,\, \Gamma$}
		\AxiomC{$\cdots$}
		\RightLabel{\quad$\OmegaT_h$}
		\TrinaryInfC{$\Gamma$}
\end{prooftree}
where 
$\lev((\lnot \mu X.A)') = h$ and
$\Delta$ ranges over $h$-positive sequents such that there is a cut-free proof of the sequent $\Delta,\, (\mu X.A)'$ in $\syso_{k-1}$.

\begin{definition}
We use $\syso_k \prov{}{0} \Gamma$ to express that there is a cut-free derivation of $\Gamma$ in $\syso_k$.
\end{definition}
In a more formal notation we can state the $\Omega_h$-rule as follows. If for every $h$-positive sequent $\Delta$
\[
\syso_{k-1} \prov{}{0}  \Delta,\, (\mu X.A)' \quad \Longrightarrow \quad 
\syso_{k} \prov{}{}  \Delta,\, \Gamma,
\]
then
\[
\syso_{k} \prov{}{}  \Gamma,\, (\lnot \mu X.A)',
\]
and similarly for $\OmegaT_h$.

Note that System $\syso_0$ does not include $\Omega_h$- or  $\OmegaT_h$-rules. Hence we immediately get the following lemma.
 
\begin{lemma}\label{eq:mnull:1}
Let\/ $\Gamma$ be an $\langnull$ sequent. We have
\[
\syso_0 \prov{}{0} \Gamma \quad \Longrightarrow \quad \sysi \prov{}{} \Gamma.
\]
\end{lemma}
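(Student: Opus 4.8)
The plan is to argue directly by induction on the given cut-free $\syso_0$ derivation, showing that the absence of the new connective $\nuaux$ from the endsequent propagates up the entire derivation tree, so that the derivation is in fact already a derivation in $\sysi$.

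First I would pin down which inferences can occur. Since $k=0$, the side condition $1 \leq h \leq k$ on the $\omegarule_h$- and $\OmegaT_h$-rules is unsatisfiable, so neither rule is available in $\syso_0$; and because the derivation is cut-free, the $\cutrule$ rule does not occur either. Hence every inference in the derivation is an instance of the axiom or of one of the rules $(\lor)$, $(\land)$, $(\Box)$, $(\clorule)$, $(\nurule)$ of $\sysi$, merely formulated over the larger language $\langaux$.

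Next I would establish the key invariant for the rules that have premises: if the conclusion of such an inference contains no occurrence of $\nuaux$, then neither does any of its premises. This is checked rule by rule. For $(\lor)$ and $(\land)$ it is immediate, since every formula of a premise is a subformula of a formula of the conclusion. For $(\Box)$ the side context $\Sigma$ is simply discarded when passing to the premise, which can only remove occurrences of $\nuaux$. The two fixed-point rules are the only ones involving substitution: for $(\clorule)$ the premise formula $A(\mu X.A)$ is $\nuaux$-free whenever $\mu X.A$ is, because every $\nuaux$ in $A(\mu X.A)$ stems either from $A$ itself or from a substituted copy of $\mu X.A$, both of which occur in the conclusion; and for $(\nurule)$ each premise formula $A^i(\top)$ is $\nuaux$-free whenever $\nu X.A$ is, using additionally that $\top$ contains no $\nuaux$.

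From this invariant the lemma follows by a routine induction on the structure of the derivation. In the base case the derivation is an axiom $\Gamma, P, \lnot P$, which is already an $\sysi$ axiom. In the inductive step the derivation ends in one of the rules above; by the invariant each premise is $\nuaux$-free, so the induction hypothesis applies to each immediate subderivation and yields an $\sysi$ derivation of that premise, while the final inference is itself a genuine instance of the corresponding $\sysi$ rule. Since the endsequent $\Gamma$ is an $\langnull$ sequent and hence $\nuaux$-free, it follows that every sequent occurring in the derivation is $\nuaux$-free, i.e.\ an $\langnull$ sequent, and that the whole object is a derivation of $\Gamma$ in $\sysi$, as required. The only point requiring any care is the treatment of the two substitution cases in the invariant; everything else is mechanical, which is why the lemma is called immediate.
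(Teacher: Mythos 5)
Your proof is correct and takes essentially the same route as the paper, which simply observes that $\syso_0$ contains no $\Omega_h$- or $\OmegaT_h$-rules and declares the lemma immediate. Your additional verification that $\nuaux$-freeness of the endsequent propagates upward through the remaining rules (in particular through the substitution cases $\clorule$ and $\nurule$) is exactly the detail the paper leaves implicit, and it is checked correctly.
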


\section{Embedding}

In this section we present a translation from $\sys$-proofs into $\syso_k$-proofs. 
First we establish an auxiliary lemma.

%
%
%
%
\begin{lemma}\label{l:key:1}
For all natural numbers $h \leq k$ we have the following.
\begin{enumerate}
\item If\/ $\lev(\mu X.A)=h$, then\/ $\syso_k \prov{}{0} \mu X.A,\, \lnot \mu X.A$.
\item If\/ $\lev(A)=h$, then\/ $\syso_k \prov{}{0} \Gamma,\, A' \quad\Longrightarrow\quad \syso_k \prov{}{0} \Gamma,\, A$.
\item If $\lev(\mu X.A)=h$, then $\syso_k \prov{}{0} \mu X.A,\, (\lnot \mu X.A)'$.
\item If\/ $\lev(A)=h$, then\/ $\syso_k \prov{}{0} B,\,C \quad\Longrightarrow\quad \syso_k \prov{}{0} (\lnot A)(B),\, A(C)$.
\item If\/ $\lev(A)=h$, then\/ $\syso_k \prov{}{0} B,\,C' \quad\Longrightarrow\quad \syso_k \prov{}{0} (\lnot A)(B),\, A'(C')$.
\end{enumerate}
\end{lemma}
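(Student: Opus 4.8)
The five statements are mutually dependent, so the plan is to prove them simultaneously by induction on the level $h$, and, within a fixed level, by nested inductions whose order is chosen so that each appeal is either to a statement already established at level $h$ or to the induction hypothesis at level $h-1$. Concretely I would establish the items in the order (1), (2), (3), (4), (5). At the base $h=0$ the operator forms contain no fixed points, so (1) and (3) are vacuous (a fixed point has level $\ge 1$), (2) is trivial because $A'=A$, and (4), (5) follow by a routine structural induction using only the propositional and modal rules together with the admissible rule of weakening. So assume $h\ge 1$ and that all five items hold at every level below $h$.

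For the unprimed identity (1), $\syso_k \prov{}{0} \mu X.A,\ \lnot\mu X.A$, I would introduce the greatest-fixed-point side $\lnot\mu X.A = \nu X.\lnot A$ by the $(\nurule)$-rule, reducing the goal to deriving $\mu X.A,\ (\lnot A)^i(\top)$ for every $i$. This family I obtain by an inner induction on $i$: the base case $i=0$ uses the $\top$-axiom, and in the step I unfold the least-fixed-point side once with the $(\clorule)$-rule and apply the monotonicity of the body, namely part (4) for $A$, which has level $h-1$ and is therefore available from the induction hypothesis. The mixed identity (3), $\syso_k \prov{}{0} \mu X.A,\ (\lnot\mu X.A)'$, is analogous, but its primed negation $(\lnot\mu X.A)' = \nuaux X.(\lnot A)'$ is an auxiliary greatest fixed point and so has no $(\nurule)$-introduction; I would instead introduce it by the $\Omega_h$-rule. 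The premises of that rule ask, for every $h$-positive $\Delta$ admitting a cut-free $\syso_{k-1}$-proof of $\Delta,\ (\mu X.A)'$, for a derivation of $\Delta,\ \mu X.A$; since $(\mu X.A)' = \mu X.A'$, each premise is precisely an instance of the unpriming statement (2).

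The monotonicity statements (4) and (5) I would prove by structural induction on the operator form $A$ with $\lev(A)=h$. The propositional cases $A=A_1\land A_2$ and $A=A_1\lor A_2$ and the modal cases $A=\Box A_0$ and $A=\Diamond A_0$ are handled by the matching rules $(\land)$, $(\lor)$, $(\Box)$ applied to the derivations supplied by the structural hypothesis, again modulo weakening. The remaining cases are those in which $A$ is itself a fixed point; there $X$ is bound, the substitutions become vacuous, and the required sequent collapses to an identity between a least fixed point and its (possibly primed) negation. For (4) this is exactly part (1) for that fixed point at level $h$, using the duality $\lnot\mu X.A=\nu X.\lnot A$ to match the two occurrences; for (5) the $\nu$-subcase is exactly part (3) for $\mu X.\lnot A_0$, while the $\mu$-subcase is matched up using part (3) together with the unpriming (2).

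Finally the unpriming (2), transforming a cut-free proof of $\Gamma,A'$ into one of $\Gamma,A$, I would prove by induction on the given derivation, pushing the replacement of $\nuaux X$ by $\nu X$ through every inference. Most cases are immediate, because a distinguished auxiliary fixed point is never principal for the propositional, modal, or $(\clorule)$-rules and simply rides along --- in particular through the weakening slot $\Sigma$ of the $(\Box)$-rule. The crux, which I expect to be the main obstacle, is the case in which such a fixed point was created by an $\Omega_h$-rule in its primed shape $(\lnot\mu X.B)'$: to unprime it to the genuine $\nu X.\lnot B$ I must reconstruct it through the $(\nurule)$-rule, extracting each finite approximant $(\lnot B)^i(\top)$ from the family of $\Omega_h$-premises by instantiating $\Delta$ suitably and combining with the identity statement (1) and the monotonicity available from the induction hypothesis. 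The real work of the lemma is thus to keep this web of appeals well-founded --- proving (1) before its use in (4), discharging the $\Omega_h$-premises of (3) through (2), and verifying at each invocation that the side conditions of the $\Omega_h$-rule ($h$-positivity of $\Delta$ and the descent to $\syso_{k-1}$) are genuinely met.
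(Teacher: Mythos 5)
Your proposal follows essentially the same route as the paper: a simultaneous induction on the level $h$, with statement (2) proved by an inner induction on the derivation whose crucial case is an $\Omega_h$-inference introducing $(\nu X.A_0)'$, resolved exactly as you describe by instantiating $\Delta$ with the approximants $A_0^i(\top)$ (derived cut-free in $\syso_{k-1}$ via the lower-level statement (5) and $\clorule$) and reassembling the genuine fixed point with $\nurule$. The only small inaccuracy is in the $\mu$-subcase of (5), where the required sequent $\nu X.\lnot A_0,\,(\mu X.A_0)'$ is not literally an instance of (3) combined with (2); it is instead obtained by the same approximant argument, i.e.\ by $\nurule$ together with (5) at lower level and $\clorule$.
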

\begin{proof}
The five statements are shown simultaneously by induction on $h$.
For space considerations we show only one particular case of the second statement, which is shown by induction on the derivation of
$ \Gamma,\, A'$ and a case distinction on the last rule.
Assume the last rule is an instance of $\Omega_h$ with main formula $A'$.
	We have $A' = (\nu X.A_0)'$ with $\lev(A_0)<h$.
	By the premise of the $\Omega_h$-rule we have for all $h$-positive sequents $\Delta$
	\begin{equation}\label{eq:simind:2}
	\syso_{k-1} \prov{}{0} \Delta, (\mu X.\lnot A_0)' \quad \Longrightarrow \quad \syso_k \prov{}{0} \Delta, \Gamma.
	\end{equation}
	Trivially we have
	\begin{equation}\label{eq:simind:4}
	\syso_k \prov{}{0} \top, \Gamma.
	\end{equation}
	We also have
	\begin{equation*}\label{eq:simind:3}
	\syso_{k-1} \prov{}{0} \top, (\mu X.\lnot A_0)'
	\end{equation*}
	from which we get by the induction hypothesis for the fifth claim of this lemma
%
%
	\[
	\syso_{k-1} \prov{}{0} A_0(\top), (\lnot A_0)'((\mu X.\lnot A_0)').
	\]
	An application of $\clorule$ yields
	\begin{equation*}\label{eq:simind:5}
	\syso_{k-1} \prov{}{0} A_0(\top), (\mu X.\lnot A_0)'.
	\end{equation*}
	By \eqref{eq:simind:2} we get
	\begin{equation}\label{eq:simind:6}
	\syso_k \prov{}{0} A_0(\top), \Gamma.
	\end{equation}
	Note that \eqref{eq:simind:4} and \eqref{eq:simind:6} are the first two premises of an instance of $\nurule$. By further iterating this we obtain for all $i$
	\[
	\syso_k \prov{}{0} A_0^i(\top), \Gamma.
	\]
	Hence an application of $\nurule$ yields 
	\[
	\syso_k \prov{}{0} \nu X.A_0, \Gamma.
	\]
\end{proof}

We will need a  certain form of the induction rule in $\syso_k$, which we are going to derive next. We write $\Sigma[(\mu X.A)':=B]$ for the result of simultaneously replacing in every formula in $\Sigma$ every occurrence of $(\mu X.A)'$ with $B$.


\begin{lemma}\label{l:clo:1}
Let $A$ be an operator form with $\lev(\nu X.A) \leq k$.
Let $\Delta, \Sigma_1, \Sigma_2$ be $h$-positive sequents 
and let $B$ be a formula with $\lev(B) \leq k$. 
Assume that
\begin{equation*}\label{eq:assInd:1}
\syso_k \prov{}{} (\lnot A(B))' ,\, B  \quad \text{and} \quad 
\syso_k \prov{}{} (\lnot A(B))' ,\, B'. 
\end{equation*}
Then we have, if
\[
\syso_{k-1} \prov{}{0} \Delta,\, \Sigma_1,\, \Sigma_2 
\]
then
\[
\syso_k \prov{}{} \Delta,\, \Sigma_1[(\mu X.A)':=B],\, \Sigma_2[(\mu X.A)':=B']. 
\]
\end{lemma}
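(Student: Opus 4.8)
The plan is to argue by induction on the given cut-free $\syso_{k-1}$-derivation of $\Delta, \Sigma_1, \Sigma_2$, with a case distinction on the last rule. The two hypotheses on $B$ are to be read as the premise of a derived induction rule: they say that $B$ (resp.\ $B'$) is closed under the operator induced by $A$, and they will be invoked --- together with a $\cutrule$ on the formula $A(B)$ --- exactly when an unfolding of $(\mu X.A)'$ is encountered. Since $(\mu X.A)'$ is a least-fixed-point formula, its outermost connective is $\mu$, which is untouched by $\lor$, $\land$, $\Box$, $\nurule$ and by $\clorule$ acting on any other fixed point; hence for all these rules the two substitutions $[(\mu X.A)':=B]$ and $[(\mu X.A)':=B']$ commute with the rule. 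In each such case I would split $\Delta, \Sigma_1, \Sigma_2$ so that principal and side formulas keep their membership in $\Delta$, $\Sigma_1$ or $\Sigma_2$, apply the induction hypothesis to the premise(s) --- whose $h$-positivity is inherited, as the logical and modal rules do not raise fixed-point levels --- and reapply the same rule in $\syso_k$. The $\Box$-case additionally uses that substitution commutes with the $\Diamond(\cdot)$ operation and that the weakening context $\Sigma$ may simply be discarded.

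The one genuinely constructive case is $\clorule$ with principal formula $(\mu X.A)'$. Suppose first it occurs in $\Sigma_2$. The premise derives $\Gamma_0, (A(\mu X.A))'$, and $(A(\mu X.A))'$ unfolds to $A'((\mu X.A)')$; treating this as a $\Sigma_2$-formula and applying the induction hypothesis produces a $\syso_k$-derivation containing $A'(B') = (A(B))'$. A $\cutrule$ on $A(B)$ --- admissible because $\lev(A(B)) \le k$, since $\lev(A)\le k-1$ and $\lev(B)\le k$ --- against a weakening of the hypothesis $\syso_k \prov{}{} (\lnot A(B))', B'$ then replaces this formula by $B'$, as required. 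If instead $(\mu X.A)'$ occurs in $\Sigma_1$, the same steps yield $A'(B)$ rather than $(A(B))'$, so before cutting against the companion hypothesis $\syso_k \prov{}{} (\lnot A(B))', B$ I would first reconcile the primed and unprimed occurrences of $B$ by an appeal to Lemma~\ref{l:key:1}(2). This is precisely why the statement carries both substitution targets and demands both closure derivations.

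The step I expect to be the main obstacle is the treatment of the $\Omega_{h'}$- and $\OmegaT_{h'}$-rules. These may legitimately be the last rule: their principal formula $(\lnot\mu X.C)'$ has level $h'$ and the conclusion remains $h$-positive whenever $h'<h$, while the substitution leaves that principal formula intact (it is a $\nu'$-formula, not $(\mu X.A)'$), so the natural move is to reapply the rule after transforming its premises. The difficulty is that the premise family of an $\Omega_{h'}$ inside $\syso_{k-1}$ is indexed by the auxiliary sequents $\Delta'$ admitting a cut-free derivation in the lower system $\syso_{k-2}$, whereas re-establishing the rule inside $\syso_k$ requires premises indexed by sequents derivable already in $\syso_{k-1}$ --- a strictly wider family. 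Reconciling these two families under the substitution, that is, supplying for every relevant $\Delta'$ the transformed minor premise $\Delta', \Gamma_0$ with the substitutions applied, while preserving $h$-positivity and keeping every introduced cut at level $\le k$, is the delicate point and will require a careful, uniform use of the induction hypothesis across the entire premise family.
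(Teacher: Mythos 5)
The paper states this lemma without giving its proof, so I am judging your argument on its own terms. Your overall strategy---induction on the cut-free $\syso_{k-1}$-derivation, commuting with every rule except $\clorule$ applied to $(\mu X.A)'$, and discharging that unfolding by a $\cutrule$ on $A(B)$ against the two hypotheses---is the right one, and your $\Sigma_2$-subcase is correct. But the $\Sigma_1$-subcase as you describe it does not work. From the premise $\Gamma_0, A'((\mu X.A)')$ you propose to obtain $A'(B)$ and then ``reconcile'' it via Lemma~\ref{l:key:1}(2). Three things block this: $A'(B)$ is not of the form $C'$ for any $\langnull$-formula $C$ (the primes sit on the $A$-part but not on $B$), so Lemma~\ref{l:key:1}(2) simply does not apply to it; that lemma moreover presupposes a cut-free derivation, whereas the derivation produced by your induction hypothesis already contains cuts; and the $\cutrule$ of $\syso_k$ only admits cut formulas of the fully primed shape $C'$ against $(\lnot C)'$, so a half-primed or unprimed version of $A(B)$ can never be cut against $(\lnot A(B))'$. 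The correct move---and the real reason the statement carries two contexts and two hypotheses---is to place the unfolded formula $A'((\mu X.A)')$ into the $\Sigma_2$-slot in \emph{both} subcases. The induction hypothesis then always yields $A'(B') = (A(B))'$, a legal cut partner for $(\lnot A(B))'$; one then cuts against $(\lnot A(B))',\,B'$ when the principal occurrence lay in $\Sigma_2$ and against $(\lnot A(B))',\,B$ when it lay in $\Sigma_1$. The side formula of the chosen hypothesis ($B'$ or $B$) is exactly what survives into the conclusion, which is what the two substitutions require.

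Second, you correctly identify the $\Omega_{h'}$/$\OmegaT_{h'}$ case as the remaining obstacle but leave it unresolved, and the proof cannot stop there. The mismatch of premise families is bridged by collapsing: a premise index for the rule reconstituted in $\syso_k$ is an $h'$-positive $\Delta_0$ with $\syso_{k-1}\prov{}{0}\Delta_0,(\mu X.C)'$; since every $\nuaux$-subformula of $\Delta_0,(\mu X.C)'$ has level below $h'\le k-1$, this sequent is $(k-1)$-positive, so Lemma~\ref{l:col1:n} yields $\syso_{k-2}\prov{}{0}\Delta_0,(\mu X.C)'$, i.e.\ an index of the \emph{original} premise family, to which the induction hypothesis applies. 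This makes the argument depend on the collapsing lemma at lower indices, so the lemmas must be organized as a simultaneous induction on $k$ rather than in the linear order in which they are stated; that dependency needs to be made explicit rather than left as ``the delicate point.''
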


\begin{lemma}\label{l:IndToOmega:1}
Let $A$ be an operator form with $\lev(\nu X.A) \leq k$. 
Further let $B$ be an arbitrary formula with $\lev(B) \leq k$. 
Assume that
\begin{equation*}\label{eq:assClo:1}
\syso_k \prov{}{} (\lnot A(B))' ,\, B  \quad \text{and} \quad 
\syso_k \prov{}{} (\lnot A(B))' ,\, B'. 
\end{equation*}
Then we have
\[ 
\syso_k \prov{}{} (\lnot \mu X.A)'  ,\, B   \quad \text{and} \quad 
\syso_k \prov{}{} (\lnot \mu X.A)'  ,\, B'. 
\]
\end{lemma}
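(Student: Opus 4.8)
The plan is to derive each of the two conclusions by a single application of the $\Omega_h$-rule, feeding its premises from Lemma~\ref{l:clo:1}. First I would pin down the level. Writing $h := \lev(\mu X.A)$, I note that since negation and priming leave levels unchanged we have $\lev((\lnot \mu X.A)') = \lev(\nu X.\lnot A) = \lev(A)+1 = h$, so $1 \leq h \leq k$ by the hypothesis $\lev(\nu X.A)\leq k$. Moreover $(\mu X.A)'$ is $h$-positive, because every $\nuaux$ it contains arises from a $\nu$ occurring strictly inside $A$ and hence has level $< h$. These are exactly the side conditions that the $\Omega_h$-rule and Lemma~\ref{l:clo:1} require, and the two hypotheses of the present lemma are verbatim the two hypotheses of Lemma~\ref{l:clo:1} for the same $A$ and $B$ (with $\lev(B) \leq k$).

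To obtain $\syso_k \prov{}{} (\lnot \mu X.A)', B$ I would apply $\Omega_h$ with $\Gamma := B$. By the formal statement of the rule it suffices to show that for every $h$-positive sequent $\Delta$
\[
\syso_{k-1} \prov{}{0} \Delta,\, (\mu X.A)' \quad\Longrightarrow\quad \syso_k \prov{}{} \Delta,\, B .
\]
So I fix such a $\Delta$ together with its cut-free $\syso_{k-1}$-derivation of $\Delta,(\mu X.A)'$ and invoke Lemma~\ref{l:clo:1} with $\Sigma_1 := \{(\mu X.A)'\}$ and $\Sigma_2 := \emptyset$. Since $\Delta,\Sigma_1,\Sigma_2 = \Delta,(\mu X.A)'$ and $\Sigma_1[(\mu X.A)':=B] = \{B\}$, the lemma yields $\syso_k \prov{}{} \Delta, B$, exactly the required premise. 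The conclusion produced by $\Omega_h$ is then $B,(\lnot \mu X.A)'$.

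The second conclusion $\syso_k \prov{}{} (\lnot \mu X.A)', B'$ is obtained symmetrically: I would apply $\Omega_h$ with $\Gamma := B'$, and for each admissible $\Delta$ appeal to Lemma~\ref{l:clo:1} with the roles of the two contexts reversed, taking $\Sigma_1 := \emptyset$ and $\Sigma_2 := \{(\mu X.A)'\}$, so that $\Sigma_2[(\mu X.A)':=B'] = \{B'\}$ delivers $\syso_k \prov{}{} \Delta, B'$.

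I expect the argument to be essentially bookkeeping once Lemma~\ref{l:clo:1} is available; the genuine content of translating the induction rule to the $\Omega$-rule lives there. The only points that need care are the level and $h$-positivity verifications above, and matching the substitution notation $\Sigma_i[(\mu X.A)':=\cdot]$ so that precisely the designated occurrence of $(\mu X.A)'$—and not any stray copies that might sit inside $\Delta$—is the one replaced by $B$ or $B'$. Because Lemma~\ref{l:clo:1} leaves the context $\Delta$ untouched, this is automatic, and I anticipate no real obstacle beyond this clerical matching.
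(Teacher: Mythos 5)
Your proposal is correct and follows exactly the paper's own route: the paper likewise sets $h=\lev(\nu X.A)$, obtains the implication $\syso_{k-1}\prov{}{0}\Delta,(\mu X.A)' \Rightarrow \syso_k\prov{}{}\Delta,B$ for all $h$-positive $\Delta$ from Lemma~\ref{l:clo:1}, and concludes with one application of $\Omega_h$ (and symmetrically for $B'$). You merely spell out the instantiation $\Sigma_1=\{(\mu X.A)'\}$, $\Sigma_2=\emptyset$ (resp.\ the reversed one) and the level/$h$-positivity checks that the paper leaves implicit.
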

\begin{proof}
Let $h = \lev(\nu X.A)$.
In view of our assumptions and the previous lemma we know that for all $h$-positive sequents $\Delta$
\[ 
\syso_{k-1} \prov{}{0} \Delta,\, (\mu X.A)' \quad\Longrightarrow\quad \syso_k \prov{}{} \Delta,\, B.
\]
Hence by an application of the $\Omega_h$-rule we conclude $\syso_k \prov{}{} (\lnot \mu X.A)' ,\, B$. 
Similarly, we can derive $\syso_k \prov{}{} (\lnot \mu X.A)' ,\, B'$. 
\end{proof}

\begin{theorem}\label{th:em:1}
Let\/ $\Gamma$ be a sequent of $\langnull$.
Assume $\sys \prov{}{} \Gamma$ and assume further for any sequent $\Delta$ occurring in that proof we have $\lev(\Delta) \leq k$.
Then we have $\syso_k \prov{}{} \Gamma$. 
\end{theorem}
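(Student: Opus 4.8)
The plan is to proceed by induction on the assumed $\sys$-derivation of $\Gamma$, with a case distinction on its last rule, translating each rule of $\sys$ into $\syso_k$. Because $\syso_k$ contains all axioms and rules of $\sysi$ (formulated over $\langaux$), most cases are mechanical: for the propositional axiom $\Gamma, P, \lnot P$ and for $(\lor)$, $(\land)$, $(\Box)$ and $\clorule$ one simply applies the induction hypothesis to the premises and re-applies the identical rule, the hypothesis $\lev(\Delta) \le k$ ensuring the level bound $k$ is respected throughout. The fixed-point axiom $\Gamma, \mu X.A, \lnot \mu X.A$ is not an axiom of $\syso_k$, but it is delivered by Lemma~\ref{l:key:1}(1) together with admissibility of weakening (itself a routine induction on derivations), using $\lev(\mu X.A) \le k$.

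The two cases that carry the real content are $\indrule$ and $\cutrule$, the rules responsible for greatest fixed points and for cut formulas. An instance of $\cutrule$ on a formula $A$ with $\lev(A)\le k$ must be sent to the $\cutrule$-rule of $\syso_k$, whose premises are $\Gamma, A'$ and $\Gamma, (\lnot A)'$ and whose conclusion is exactly $\Gamma$. An instance of $\indrule$ with premise $\lnot A(B), B$ and conclusion $\lnot\mu X.A, B$ is handled, when the greatest fixed point $\lnot\mu X.A = \nu X.\lnot A$ is implicit, by feeding the primed sequents $(\lnot A(B))', B$ and $(\lnot A(B))', B'$ into Lemma~\ref{l:IndToOmega:1} to obtain $(\lnot\mu X.A)', B$; when that fixed point is explicit it is instead introduced by the $\nurule$-rule and kept unprimed.

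The crux --- and the reason a naive induction does not close --- is the clash between the \emph{unprimed} fixed points $\nu X.A$ that the induction hypothesis naturally produces and the \emph{primed} fixed points $\nuaux X.A$ that the $\syso_k$-cut rule and Lemma~\ref{l:IndToOmega:1} demand. Priming is not for free: the auxiliary results only move in the opposite direction, Lemma~\ref{l:key:1}(2) converting a primed cut-free proof into an unprimed one and never conversely. I would therefore strengthen the induction hypothesis so that it records, for every sequent of the $\sys$-proof, which fixed-point occurrences are \emph{implicit} (i.e.\ traceable to a cut formula) and produces a $\syso_k$-derivation in which precisely those occurrences appear primed. Under this refined invariant the cut formula of every $\cutrule$ and the greatest fixed point of every implicit $\indrule$ are primed from birth, so the required sequents match on the nose; the half-primed axiom of Lemma~\ref{l:key:1}(3) and the functoriality statements (4) and~(5) supply exactly the building blocks for propagating and substituting primes coherently along a subderivation. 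Since the endsequent $\Gamma$ consists only of explicit occurrences, it emerges unprimed, which is what the theorem asserts.

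The step I expect to be hardest is making this implicit/explicit bookkeeping precise and stable: in particular, at a $\cutrule$ one must prime the shared context $\Gamma$ in the \emph{same} way in both premises, and one must check that the tracing of occurrences through the shared-context rules --- especially $(\Box)$, which discards part of the context, and $\clorule$, which unfolds a fixed point --- is consistent with the chosen priming. Once this invariant is set up correctly, the individual rule translations reduce to the already-proved Lemmas~\ref{l:key:1}, \ref{l:clo:1} and~\ref{l:IndToOmega:1}.
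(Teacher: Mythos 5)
Your proposal is correct and follows essentially the same route as the paper: the paper's strengthened induction hypothesis is ``for every '-operation $\sigma$ (an arbitrary choice of which formulas of the sequent to prime), $\syso_k \prov{}{} \Gamma^\sigma$'', which is exactly the clean form of your refined invariant --- quantifying over \emph{all} primings instead of tracking traceability makes the bookkeeping you flag as the hardest step disappear, since at a $\cutrule$ one just extends the given $\sigma$ to prime the cut formula in both premises, and at $\indrule$ the case split is read off from whether $\sigma$ primes the principal formula. All your other ingredients (Lemma~\ref{l:key:1}(1)/(3) for the fixed-point axiom, Lemma~\ref{l:IndToOmega:1} for the implicit $\indrule$ case, and $\nurule$ --- realized in the paper via iterated cuts against Lemma~\ref{l:key:1}(5) --- for the explicit case) match the paper's proof.
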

\begin{proof}
An operation $\sigma$ on sequents is called '-operation if $\sigma(\Gamma,A_1,\ldots,A_n) = \Gamma,A'_1,\ldots,A'_n$.
%
The result of applying $\sigma$ to a sequent $\Gamma$ is denoted $\Gamma^\sigma$.

To establish the theorem, we show by induction on the depth of the $\sys$-proof that 
for all '-operations $\sigma$, we have 
$\syso_k \prov{}{} \Gamma^\sigma$. 
We distinguish the following cases for the last rule.
\begin{enumerate}
\item $\Gamma$ is an axiom different from $\Gamma_0, \mu X.A, \lnot \mu X.A$. Then $\Gamma^\sigma$ is an axiom of $\syso_k$, too.
\item $\Gamma$ is $\Gamma_0, \mu X.A, \lnot \mu X.A$.
	Then $\Gamma^\sigma$ follows either by the first or the third claim of Lemma \ref{l:key:1} depending on whether $\lnot \mu X.A$ is replaced by $\sigma$ or not.
\item The last rule is an instance of $\land$, $\lor$, $\Box$ or $\clorule$.
	We can apply the same rule in $\syso_k$.
\item The last rule is a cut
	\[
	\proofrule{\Gamma,A \quad\quad\quad \Gamma,\lnot A}{\Gamma}{}. 
	\]
      We extend the current '-operation $\sigma$ to a '-operation $\tau$ such that
      $(\Gamma, A)^\tau = \Gamma^\sigma, A'$ and 
	$(\Gamma, \lnot A)^\tau = \Gamma^\sigma, (\lnot A)'$ 
	By the induction hypothesis for the '-operation  $\tau$ we obtain 
	$\syso_k \prov{}{} \Gamma^\sigma, A'$ as well as 
	$\syso_k \prov{}{} \Gamma^\sigma, (\lnot A)'$.
	With an instance of cut we get
	$\syso_k \prov{}{} \Gamma^\sigma$.
\item The last rule is an instance of the induction rule. Then the endsequent has the form 
$\lnot \mu X.A,\, B$ which is $\nu X. \lnot A,\, B$. 
There are two possible cases.
	\begin{enumerate}
	\item The principal occurrence of $\nu X. \lnot A$ is not changed by $\sigma$.
		By the induction hypothesis we can derive 
		$(\lnot A(B))',\, B^\sigma$ and 
		$(\lnot A(B))',\, B'$.
		We obtain our claim by the following proof.
		\begin{prooftree}
		\def\extraVskip{4pt}
		\AxiomC{$\cdots$}

		\AxiomC{}
		\RightLabel{I.H.}		
		\UnaryInfC{$(\lnot A(B))',\, B^\sigma$}

		\AxiomC{}
		\RightLabel{I.H.}		
		\UnaryInfC{$(\lnot A(B))',\, B'$}
		\AxiomC{$\top, B'$}
		\RightLabel{L.~\ref{l:key:1}}		
		\UnaryInfC{$(\lnot A)(\top), (A(B))'$}
		\RightLabel{\quad$\cutrule$}		
		\BinaryInfC{$(\lnot A)(\top), B'$}
		\noLine
		\UnaryInfC{$\vdots$}
		\noLine
		\UnaryInfC{$(\lnot A)^i(\top), B'$}
		\RightLabel{L.~\ref{l:key:1}}		
		\UnaryInfC{$(\lnot A)^{i+1}(\top), (A(B))'$}
		\RightLabel{$\cutrule$}		
		\BinaryInfC{$(\lnot A)^{i+1}(\top), B^\sigma$}

		\AxiomC{$\cdots$}
		\RightLabel{$\nurule$}
		\TrinaryInfC{$\nu X. \lnot A,\, B^\sigma$}
\end{prooftree}
%
%

	%
	\item The principal occurrence of $\nu X. \lnot A$
	is changed by $\sigma$. 
      Let $\tau_1, \tau_2$ be '-operations such that
	\[
	(\lnot A(B),  B)^{\tau_1} = (\lnot A(B))', B
	\]
	and
	\[
	(\lnot A(B), B)^{\tau_2} = (\lnot A(B))', B'.
	\]
	By the induction hypothesis for $\tau_1$ and $\tau_2$ we obtain 
      \[
      \syso_k \prov{}{} (\lnot A(B))' ,\, B \quad \text{and} \quad
      \syso_k \prov{}{} (\lnot A(B))' ,\, B'.
	\]
	We apply 
	Lemma~\ref{l:IndToOmega:1} and conclude
	$\syso_k \prov{}{}  (\lnot \mu X.A)' ,\, B^\sigma$.
	\qedhere
	\end{enumerate}
\end{enumerate}
\end{proof}

\section{Cut elimination}

We eliminate instances of $\cutrule$ in the standard way, see for instance~\cite{buchholz01,mintsMu}, by pushing them up the derivation. When an instance of $\cutrule$ with cut formulae $(\mu X.A)'$ and $(\lnot \mu X.A)'$ meets the instance of $\Omega_h$ that introduces $(\lnot \mu X.A)'$, this pair of inferences is replaced by 
$\OmegaT_h$. 

\begin{lemma}[Cut-elimination]\label{l:el:1}
If\/ $\syso_k \prov{}{} \Gamma$, then\/ $\syso_k \prov{}{0} \Gamma$.
\end{lemma}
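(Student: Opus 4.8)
The plan is to prove cut-elimination by the standard method of permuting cut instances upward through the derivation, carrying out a double induction on the rank of the cut-formula and on the height of the derivation above the cut. More precisely, I would first define the rank of a cut to be $\lev(A)$ where $(\mu X.A)'$, $(\lnot\mu X.A)'$ (or more generally $A'$, $(\lnot A)'$) are its cut-formulae, and argue that it suffices to eliminate a single topmost cut whose two subderivations are already cut-free. I would then proceed by induction on $k$ (the system index) together with an induction on the height of the left premise of that cut, pushing the cut toward the axioms. The $\syso_{k-1}$-provability side-conditions attached to the premises of $\Omega_h$ and $\OmegaT_h$ must be respected throughout, so the permutation argument has to verify that these side conditions are preserved when a cut is moved past a rule.

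The central case, as the excerpt explicitly flags, is the \emph{principal} cut on a greatest-fixed-point pair $(\mu X.A)'$ and $(\lnot\mu X.A)'$, where the right premise ends in the $\Omega_h$-rule that introduces $(\lnot\mu X.A)'$. Here I would replace the pair consisting of this $\cutrule$ together with the $\Omega_h$ instance by a single instance of $\OmegaT_h$, exactly as announced in the paragraph preceding the lemma. Concretely, the left subderivation of the cut furnishes a cut-free $\syso_{k-1}$-proof of $\Delta,\,(\mu X.A)'$ for the distinguished $\Delta$, and the defining premise of $\Omega_h$ supplies, for every $h$-positive $\Delta$ with $\syso_{k-1}\prov{}{0}\Delta,(\mu X.A)'$, a proof of $\Delta,\Gamma$; feeding the left premise as the additional first premise $\Gamma,(\mu X.A)'$ turns this local configuration into a legal $\OmegaT_h$ inference concluding $\Gamma$. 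The side derivations (the $h$-positive premises indexed by $\Delta$) are carried over unchanged, which is why the $\syso_{k-1}$-cut-freeness hypothesis on $\Delta$ is exactly what makes the transformation well-typed.

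For the remaining cases I would handle the commuting cuts (where at least one premise of the cut arises from a logical or modal rule not acting on the cut-formula) by the usual upward permutation, and the logical principal cuts on $\land/\lor$, $\Box$, and the $\clorule/\nurule$ interaction in the customary way — for the $\nurule$-versus-$\clorule$ principal cut one selects the appropriate finite approximant $A^i(\top)$ from the premises of $\nurule$ and reduces the rank. Because cuts are now permitted only on formulae of level $\leq k$ and the transformations above strictly decrease either the rank or the height, the outer induction closes.

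I expect the main obstacle to be the principal-cut/$\Omega_h$ step, specifically checking that after replacing it by $\OmegaT_h$ the resulting derivation genuinely satisfies the global structural constraints of $\syso_k$ — that the embedded $\syso_{k-1}$-subproofs remain cut-free (so that the $\Omega$/$\OmegaT$ side conditions are not violated) and that the termination measure still decreases despite the fact that $\OmegaT_h$ reintroduces an $\Omega$-style branching. The delicate bookkeeping is that one must track implicit versus explicit occurrences of the fixed-point formulae across the permutation so that the cut being eliminated is always the one matched to the correct $\Omega_h$ instance; this is where the distinction between $\Omega_h$ and $\OmegaT_h$, and the level-indexing by $h$, does the real work.
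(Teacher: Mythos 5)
Your proposal matches the paper's approach: the paper gives only the one-paragraph sketch you quote (standard upward permutation of cuts, with the principal cut on $(\mu X.A)'$/$(\lnot\mu X.A)'$ against the introducing $\Omega_h$ replaced by a single $\OmegaT_h$ whose first premise is the left premise $\Gamma,(\mu X.A)'$ of the cut), and your elaboration of the commuting and remaining principal cases is the intended standard argument. One small slip: the left subderivation does not furnish a cut-free $\syso_{k-1}$-proof of anything — it is simply carried over verbatim as the distinguished first premise of $\OmegaT_h$ (its eventual collapse to level $l-1$ is deferred to the Collapsing Lemma) — but you state the correct transformation in the very next clause, so this does not affect the argument.
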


The cut-elimination process terminates in a formally cut-free derivation that may contain instances of $\OmegaT_h$-rules.
Now we show that these instances of $\OmegaT_h$ also can be eliminated.


\begin{lemma}[Collapsing]\label{l:col1:n}
Let\/ $\Gamma$ be an $(h+1)$-positive sequent.
If\/ $\syso_k \prov{}{0} \Gamma$, then\/ $\syso_{h} \prov{}{0} \Gamma$.
\end{lemma}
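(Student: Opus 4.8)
The plan is to prove the Collapsing Lemma by induction on the cut-free derivation of $\Gamma$ in $\syso_k$, with a case distinction on the last rule. Since $\Gamma$ is $(h+1)$-positive, every $\nuaux X.B$ occurring in $\Gamma$ has level at most $h$. The goal is to show that no rule with index strictly above $h$ is genuinely needed, so that the whole derivation collapses into $\syso_h$. For the axiom $\Gamma, P, \lnot P$ and for the propositional and modal rules $\lor, \land, \Box$, and the closure rule $\clorule$, collapsing is immediate: these rules are available in every $\syso_j$, and I would simply apply the induction hypothesis to the premises (checking that each premise remains $(h+1)$-positive; this is routine since these rules do not introduce new $\nuaux$-formulae of higher level) and then reapply the same rule in $\syso_h$.

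The only rules whose index can exceed $h$ are the $\omega$-rule introducing $\nu X.A$ when $\lev(\nu X.A) = h+1$ or higher, and the surviving $\OmegaT_j$-rules. For an $\nurule$-instance concluding $\Gamma, \nu X.A$, note that $(h+1)$-positivity concerns only $\nuaux$, not $\nu$, so an $\nurule$-introduced $\nu X.A$ of level $h+1$ is harmless: I would apply the induction hypothesis to each premise $\Gamma, A^i(\top)$ and reapply $\nurule$ in $\syso_h$, after verifying the premises stay $(h+1)$-positive. The genuinely interesting case is a final $\OmegaT_j$-rule. Here I would use the crucial fact that, because $\Gamma$ is $(h+1)$-positive, the cut-formula $(\mu X.A)'$ governing the $\OmegaT_j$-instance has $(\lnot \mu X.A)'$ of level at most $h$, so its level bound forces $j \leq h$ and in particular lets me select the distinguished premise of the rule. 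The $\OmegaT_j$-rule has a premise $\syso_{k-1} \prov{}{0} \Delta, (\mu X.A)'$ ranging over all suitable $h$-positive $\Delta$; I would instantiate this with the specific $\Delta$ arising from the conclusion itself, namely take $\Delta$ so that $\Delta, (\mu X.A)'$ is derivable and the corresponding premise $\Delta, \Gamma$ feeds the required conclusion $\Gamma$.

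The substantive step, and the main obstacle, is the $\OmegaT_j$ case. The informal statement of $\OmegaT_j$ tells me there is a premise $\Gamma, (\mu X.A)'$ alongside the family indexed by $\Delta$, together with the conclusion $\Gamma$. The idea of collapsing is to \emph{feed the $\Gamma$-premise into itself}: since $\syso_{k-1} \prov{}{0} \Gamma, (\mu X.A)'$ is a cut-free derivation over the smaller index and $\Gamma$ is itself an admissible $h$-positive side sequent (using $(h+1)$-positivity of $\Gamma$ to guarantee $(\mu X.A)'$ has the right level), I can take $\Delta := \Gamma$ in the universally-quantified premise of $\OmegaT_j$. This yields directly $\syso_k \prov{}{0} \Gamma, \Gamma$, i.e. $\syso_k \prov{}{0} \Gamma$ since sequents are sets, but now through a derivation whose last inference is no longer $\OmegaT_j$. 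Applying the induction hypothesis to this strictly simpler derivation delivers $\syso_h \prov{}{0} \Gamma$. The delicate points here are, first, checking that $\Gamma$ really qualifies as an admissible $\Delta$ — this is exactly where the hypothesis $\lev$ of the cut-formula being bounded by $h$ and $(h+1)$-positivity of $\Gamma$ are used — and second, confirming that the resulting derivation is genuinely smaller in the induction measure so that the induction hypothesis applies. I expect to need a well-founded measure on derivations (for instance the ordinal height together with the index $k$) rather than plain depth, and arranging that the self-feeding step decreases this measure is the heart of the argument.
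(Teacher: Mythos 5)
Your core idea --- diagonalizing the $\OmegaT$-rule by instantiating its $\Delta$-indexed family of premises with $\Delta := \Gamma$ --- is exactly the mechanism of the paper's proof. But two things in your write-up are wrong or missing. First, your claim that $(h+1)$-positivity of $\Gamma$ ``forces $j \leq h$'' is false: the cut formula $(\mu X.A)'$ and its negation do not occur in the conclusion $\Gamma$ of $\OmegaT_j$ (they were cut away when the $\Omega$-rule met a cut), so the positivity of $\Gamma$ places no bound whatsoever on $j$. Indeed, if $j \leq h$ the rule is still available in $\syso_h$ and there would be nothing to eliminate; the entire point of the lemma is the case $h < j \leq k$, which your level argument would wrongly exclude.

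Second, and more substantively: to use $\Gamma$ as a legitimate instance of $\Delta$, you must exhibit a cut-free derivation of $\Gamma,\, (\mu X.A)'$ in $\syso_{j-1}$ --- that is the side condition attached to the $\Delta$-family. The distinguished premise of the $\OmegaT_j$-instance only gives you such a derivation inside the ambient $\syso_k$-derivation; you assert it is ``over the smaller index'' without justification. The missing step is a \emph{first} application of the induction hypothesis to that subderivation: the sequent $\Gamma,\, (\mu X.A)'$ is $j$-positive (because $\Gamma$ is $(h+1)$-positive with $h+1 \leq j$, and every $\nuaux$-subformula of $(\mu X.A)'$ has level below $j$), so the induction hypothesis collapses it to $\syso_{j-1} \prov{}{0} \Gamma,\, (\mu X.A)'$. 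Only then does $\Delta := \Gamma$ select a genuine premise $\Gamma,\, \Gamma$ of the rule instance, to which a second application of the induction hypothesis yields $\syso_h \prov{}{0} \Gamma$. Since the $\Delta$-premises are honest subderivations of the rule, transfinite induction on the derivation tree suffices; the bespoke measure combining height and index that you anticipate is not needed.
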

\begin{proof}
By transfinite induction on the derivation in $\syso_k$.
The only interesting case is when the last rule is an instance of $\OmegaT_l$ for $h< l \leq k$ as follows
\begin{prooftree}
\def\extraVskip{4pt}
		\AxiomC{$\Gamma,\, (\mu X.A)' \qquad\qquad \cdots$}
		\AxiomC{$\syso_{l-1} \prov{}{0}  \Delta,\, (\mu X.A)'$}
		\dashedLine
		\UnaryInfC{$\Delta,\, \Gamma$}
		\AxiomC{$\cdots$}
		\RightLabel{\quad$\OmegaT_l$}
		\TrinaryInfC{$\Gamma$}
\end{prooftree}
Note that $\Gamma, (\mu X.A)'$ is $l$-positive.
Thus by the induction hypothesis we get
\begin{equation}\label{eq:col:1:n} 
\syso_{l-1} \prov{}{0} \Gamma,\, (\mu X.A)' .
\end{equation}
Moreover, also by the induction hypothesis we get for all $(h+1)$-positive $\Delta$
\begin{equation}\label{eq:col:2:n} 
\syso_{l-1} \prov{}{0} \Delta,\, (\mu X.A)' \quad \Longrightarrow \quad  
\syso_{h} \prov{}{0} \Delta,\,\Gamma.  
\end{equation}
Now we plug \eqref{eq:col:1:n} in \eqref{eq:col:2:n} and obtain $\syso_{h} \prov{}{0} \Gamma$ as required.
\end{proof}

We now have all ingredients ready for our main result.

\begin{corollary}
Let\/ $\Gamma$ be an  $\langnull$-sequent. We have
\[
\sys\prov{}{} \Gamma \quad \Longrightarrow \quad
\sysi \prov{}{} \Gamma.
\]
\end{corollary}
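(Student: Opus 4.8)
The plan is to chain together the four results established above into a single derivation pipeline, carefully tracking the level parameter $k$ as I go.

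First I would fix a proof witnessing $\sys \prov{}{} \Gamma$. Since any such proof is a finite object, only finitely many sequents occur in it, so there is a natural number $k$ that bounds the level of every sequent appearing in the proof; I take $k$ to be this maximum. With this choice of $k$, the Embedding Theorem~\ref{th:em:1} applies directly and gives $\syso_k \prov{}{} \Gamma$, translating the finitary cut-proof into the intermediate system equipped with the $\Omega_h$-machinery.

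Next I would strip out the cuts. Applying the Cut-elimination Lemma~\ref{l:el:1} to $\syso_k \prov{}{} \Gamma$ yields $\syso_k \prov{}{0} \Gamma$, a formally cut-free derivation that may nonetheless still use $\OmegaT_h$-rules. The crucial observation at this point is that $\Gamma$, being an $\langnull$-sequent, contains no occurrence of the auxiliary connective $\nuaux$; hence the defining condition for $1$-positivity is satisfied vacuously, and $\Gamma$ is $1$-positive. This is exactly the hypothesis required to invoke the Collapsing Lemma~\ref{l:col1:n} with $h=0$, which collapses the derivation in one step all the way down to $\syso_0 \prov{}{0} \Gamma$, thereby eliminating the remaining $\OmegaT_h$-instances. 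Finally, since $\syso_0$ contains no $\Omega_h$- or $\OmegaT_h$-rules at all, Lemma~\ref{eq:mnull:1} converts $\syso_0 \prov{}{0} \Gamma$ into $\sysi \prov{}{} \Gamma$, the desired cut-free infinitary derivation.

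The step I expect to require the most care conceptually is the positivity bookkeeping in the penultimate move: one must confirm that an $\langnull$-sequent genuinely meets the $(h+1)$-positivity premise of Collapsing at $h=0$, so that a single application of Lemma~\ref{l:col1:n} brings the index from $k$ down to $0$ no matter how large $k$ was. Once this is in place, the corollary is simply the composition $\sys \Rightarrow \syso_k \Rightarrow \syso_k \text{ (cut-free)} \Rightarrow \syso_0 \Rightarrow \sysi$, with each arrow supplied by one of the cited results.
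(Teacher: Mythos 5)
Your proposal is correct and follows exactly the same route as the paper: embed via Theorem~\ref{th:em:1}, eliminate cuts via Lemma~\ref{l:el:1}, collapse to $\syso_0$ via Lemma~\ref{l:col1:n}, and finish with Lemma~\ref{eq:mnull:1}. The extra detail you supply --- choosing $k$ as the maximal level in the finite proof and checking that an $\langnull$-sequent is vacuously $1$-positive so that collapsing applies with $h=0$ --- is exactly the bookkeeping the paper leaves implicit.
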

\begin{proof}
Assume $\sys\prov{}{} \Gamma$.
By Theorem~\ref{th:em:1} we get $\syso_k\prov{}{} \Gamma$ for some $k$.
By cut-elimination we obtain $\syso_k\prov{}{0} \Gamma$.
Then collapsing yields $\syso_0\prov{}{0} \Gamma$
which finally gives us $\sysi \prov{}{} \Gamma$ by Lemma~\ref{eq:mnull:1}. 
\end{proof}

\bibliographystyle{eptcs}
\bibliography{references}

\end{document}